\DeclareMathOperator*{\argmax}{argmax}
\def\BibTeX{{\rm B\kern-.05em{\sc i\kern-.025em b}\kern-.08em
    T\kern-.1667em\lower.7ex\hbox{E}\kern-.125emX}}
\newtheorem{theorem}{Theorem}
\newenvironment{proof}{{\indent \indent \it Proof:}}{\hfill $\square$\par}
\begin{document}

\title{A 3D Positioning-based Channel Estimation Method 
for RIS-aided mmWave Communications}

\author{Yaoshen~Cui, 
        Haifan~Yin, 
         Li~Tan,
         and~Marco~Di~Renzo,~\IEEEmembership{Fellow,~IEEE}
\thanks{This work was supported by the National Natural Science Foundation of China under grant No. 62071191. The corresponding author is Li Tan.}
\thanks{Yaoshen Cui, Haifan Yin, and Li Tan are with the School of Electronic Information and Communications, 
 Huazhong University of Science and Technology, 
 Wuhan, 430074, China (email: yaoshen\_cui@hust.edu.cn;  yin@hust.edu.cn; ltan@hust.edu.cn). 
}
\thanks{M. Di Renzo is with Universit\'e Paris-Saclay, CNRS, CentraleSup\'elec, Laboratoire des Signaux et Syst\`emes, 3 Rue Joliot-Curie, 91192 Gif-sur-Yvette, France. (marco.di-renzo@universite-paris-saclay.fr)} 
\thanks{A part of this work \cite{Cui:2021ICCC}
was presented in the 10th IEEE/CIC International Conference on Communications in China (IEEE/CIC ICCC 2021).}
}

\maketitle

\begin{abstract}
A fundamental challenge in millimeter-wave (mmWave) communication is the susceptibility to blocking objects. One way to alleviate this problem is the use of reconfigurable intelligent surfaces (RIS). Nevertheless, due to the large number of passive reflecting elements on RIS, channel estimation turns out to be a challenging task. In this paper, we address the channel estimation for RIS-aided mmWave communication systems based on a localization method. The proposed idea consists of exploiting the sparsity of the mmWave channel and the topology of the RIS. In particular, we first propose the concept of reflecting unit set (RUS) to improve the flexibility of RIS. We then propose a novel coplanar maximum likelihood-based (CML) 3D positioning method based on the RUS, and derive the Cramer-Rao lower bound (CRLB) for the positioning method. Furthermore, we develop an efficient positioning-based channel estimation scheme with low computational complexity. Compared to state-of-the-art methods, our proposed method requires less time-frequency resources in channel acquisition as the complexity is independent to the total size of the RIS but depends on the size of the RUSs, which is only a small portion of the RIS. Large performance gains are confirmed in simulations, which proves the effectiveness of the proposed method.
\end{abstract}

\begin{IEEEkeywords}
Reconfigurable intelligent surface, intelligent reflecting surface, millimeter-wave communication,  3D-localization, Cramer-Rao lower bound
\end{IEEEkeywords}

\IEEEpeerreviewmaketitle

\section{Introduction}

The commercialization of the fifth-generation (5G)  of communications is in full swing and is facilitated by several key technologies, including massive multiple-input multiple-output (MIMO) and millimeter-wave (mmWave) communications. These technologies leads to significantly increase of the spectral efficiency and alleviate the shortage of spectrum in wireless communication networks. 
However, mmWave communication systems are much more susceptible to blockage effects compared with lower frequency band communication systems\cite{TS:2013MmwaveWork}, which may undermine the potential of mmWave communications in some network deployments.
This problem at mmWave frequencies calls for solutions to make full use of the large available bandwidth.

A possible solution to counteract the coverage problem of mmWave systems is the deployment of  reconfigurable intelligent surfaces (RIS) \cite{Marco:2020Reconfigurable,Marco:2020Smart,Marco:2021communication}(also known as intelligent reflecting surfaces (IRSs)\cite{wuqq:2019IRS} and metasurfaces\cite{Tang:2019meta}).
RIS is an emerging technology with the capability of intelligently manipulating the electromagnetic waves, which has lately drawn much attention from academia and industry as it has the potential of controlling the wireless propagation environment at a low cost and energy consumption.
The authors of \cite{Tang:2020wireless,tang2021path} developed free-space path loss channel models for RIS-aided systems based on the electromagnetic characteristics of the RIS and validated them through simulations.
In \cite{Dai:2020RIS}, the authors developed a new type of high-gain and low-cost RIS
with 256 reflecting elements, which operates at 2.3 GHz and 28.5 GHz. The authors of \cite{Pei:2021TCOM} demonstrated a prototype of RIS-aided communication system and showed some indoor and outdoor field trials with a fabricated RIS endowed with 1100 reflecting elements. 
The authors of \cite{hou_reconfigurable_2020} conceived a system for serving paired power-domain non-orthogonal multiple access (NOMA) users by designing the passive beamforming coefficients at the RIS.

In theory, the full potential of RIS may be achieved when the channel state information (CSI) is known. However, the RIS is usually not equipped with power amplifiers, digital signal processing units, multiple radio frequency chains and the number of the reflecting elements is usually large, thus making the CSI estimation a challenging problem.

In literature, several methods have been proposed to solve estimate the channel information in RIS-aided systems\cite{pan2021overview}. These include the cascaded channel decomposition approach \cite{mishra_channel_2019,jensen_optimal_2020,wang_channel_2020,wei_channel_2021}, and methods based on the sparsity of channels\cite{chen_channel_2019,wang_compressed_2020,He:2020CascadedChannel}, the quasi-static property \cite{Liu:2020MatrixCalibration,hu_two-timescale_2021}, etc.
In a multiple-input single-output (MISO) system, the authors of  \cite{mishra_channel_2019,jensen_optimal_2020} proposed to decompose the cascaded RIS channel to sub-channels, each of them corresponding to an RIS element, which are easier to be estimated.
In a multi-user system, the authors of \cite{wang_channel_2020} proposed a three-phase pilot-based channel estimation protocol by activating users one by one. The authors of \cite{wei_channel_2021} developed two iterative estimation algorithms based on the parallel factor (PARAFAC) decomposition method.
By exploiting the angular sparsity of the channels, the authors of \cite{chen_channel_2019,wang_compressed_2020} formulated the cascaded CSI estimation  as a sparse signal recovery problem, which is solved using compressive sensing (CS) algorithms.
The authors of \cite{He:2020CascadedChannel} proposed a two-stage channel estimation algorithm by  bilinear sparse matrix factorization method.
Since the positions of the base station (BS) and the RIS are usually fixed, the BS-RIS channel varies slowly and therefore can be assumed quasi-static. The paper \cite{Liu:2020MatrixCalibration,hu_two-timescale_2021} developed a matrix-calibration-based method and a
two-timescale channel estimation framework, respectively. These methods require less training overhead by exploiting the quasi-static property of the BS-RIS channel.
However, these traditional channel estimation methods generally require huge time-frequency resources or high computational complexity, especially when the number of the reflecting elements is very large.

RIS has several applications for positioning as well\cite{Abrardo:2021Intelligent}.
In \cite{Henk:2020beyond5G}, the authors studied the possibility of RIS-aided localization and derived the corresponding Fisher information matrix (FIM) and the CRLB in closed-form expressions.
The authors of \cite{Ma:2020indoor} utilized an RIS to facilitate the positioning in conjunction with an ultra-wideband (UWB) indoor systems.
The authors of \cite{He:2020adaptive} considered an RIS-aided mmWave MIMO system when the direct line of sight (LOS) is obstructed, and proposed an adaptive phase adjustment method based on hierarchical codebooks and feedbacks from mobile stations to enable joint localization and communication.
Furthermore, conventional localization methods, e.g., those based on the received signal strength (RSS), the time of arrival (TOA), the time difference of arrival (TDOA) and the angle of arrival (AOA), can also be utilized for improving the localization performance in mmWave and massive MIMO systems\cite{Lemic:2016localization}.

In this paper, we propose a low-complexity channel estimation scheme that is based on a novel localization method. The method is designed for application to RIS-aided mmWave communication systems, and it exploits 1) the geometry of the RIS, 2) the structural information of the wideband channels, and  3) a new low-complexity coplanar maximum likelihood-based 3D positioning method.
The main contributions of this paper are summarized below:
\begin{itemize}
\item
    We introduce the concept of \emph{Reflecting Unit Set} (RUS), which is a set of neighboring reflecting units in a specific part of the RIS. A joint minimum mean square estimator (JMMSE) is proposed to calculate the distances between the RUSs and the user equipment (UE) based on the low-rank property of the wideband channel covariance matrix in the frequency domain. 
    We utilize the RUSs as anchors to compute the 3D position of the UE by exploiting the spatial structure of the wideband channel. 
\item
     We propose a coplanar maximum likelihood-based (CML) three-dimensional localization scheme that offers a high accuracy. Since the reflecting units of the RIS are coplanar, determining the 3D position of a target is a challenging problem. Closed-form expressions are derived for the localization of the target with the help of the RUSs based on the maximum likelihood (ML) criterion and the principles of Euclidean geometry.  
\item
    We derive a closed-form expression of the CRLB for the proposed scheme. We also show that its complexity does not depend on the number of reflecting units on the RIS, but it depends only on the number and size of the RUSs. As the number of the reflecting elements increases, the complexity of our proposed scheme remains low. 
\end{itemize}

With the aid of numerical results, we show that the proposed method requires much lower computational complexity and uses much fewer time-frequency resources for CSI acquisition compared with the state-of-the-art methods\cite{mishra_channel_2019,jensen_optimal_2020,wang_channel_2020,wei_channel_2021,chen_channel_2019,wang_compressed_2020,He:2020CascadedChannel,Liu:2020MatrixCalibration,hu_two-timescale_2021}. The complexity and the pilot overhead of the state-of-the-art methods depend on the total size of the RIS, which grows rapidly when increasing the size of the RIS. In contrast, the required complexity and the time-frequency resources of the proposed method are just  related to the size of a small portion of the RIS.
Specially, we show that the proposed method needs around 20 orthogonal frequency division multiplexing (OFDM) symbols for channel estimation (or 0.167 ms with 120 kHz of subcarrier spacing in mmWave systems) \cite{3gpp:38.211}, even with thousands reflecting units on the RIS. To the best of our knowledge, this is the first low-complexity channel estimation method for RIS-aided systems based on the RUS concept.

The remainder of this paper is organized as follows. In Section \ref{sec:model}, we describe the considered RIS-aided mmWave system and channel model. In Section \ref{sec:proposed}, we propose the channel estimation method based on 3D localization. In Section \ref{sec:simulation}, we present numerical results for the proposed method. Finally, Section \ref{sec:conclusion} concludes the paper.

\emph{Notation:} Bold-face symbols denote vectors and matrices. $\mathbb{C}^{x\times y}$ is the space of $x\times y$ complex-valued matrices. $\odot$, $\lfloor \cdot \rfloor$  and $\mod$ stand for the Hadamard product, the rounding down and the remainder operation, respectively.
For a vector $\bm{x}$, $\|\bm{x}\|$ is its Euclidean norm and $\text{diag}(\bm{x})$ denotes a diagonal matrix with each diagonal element being the corresponding element in $\bm{x}$. $\bm{X}^T$, $\bm{X}^*$, $\bm{X}^H$ and $\text{tr}(\bm{X})$ are the transpose, conjugate, conjugate transpose and trace of a matrix $\bm{X}$, respectively. $\bm{I}$ denotes the identity matrix.

\section{RIS-aided Wireless Communications and Channel modeling }\label{sec:model}

\begin{figure}[!t]
\centering
\includegraphics[width=3.4in]{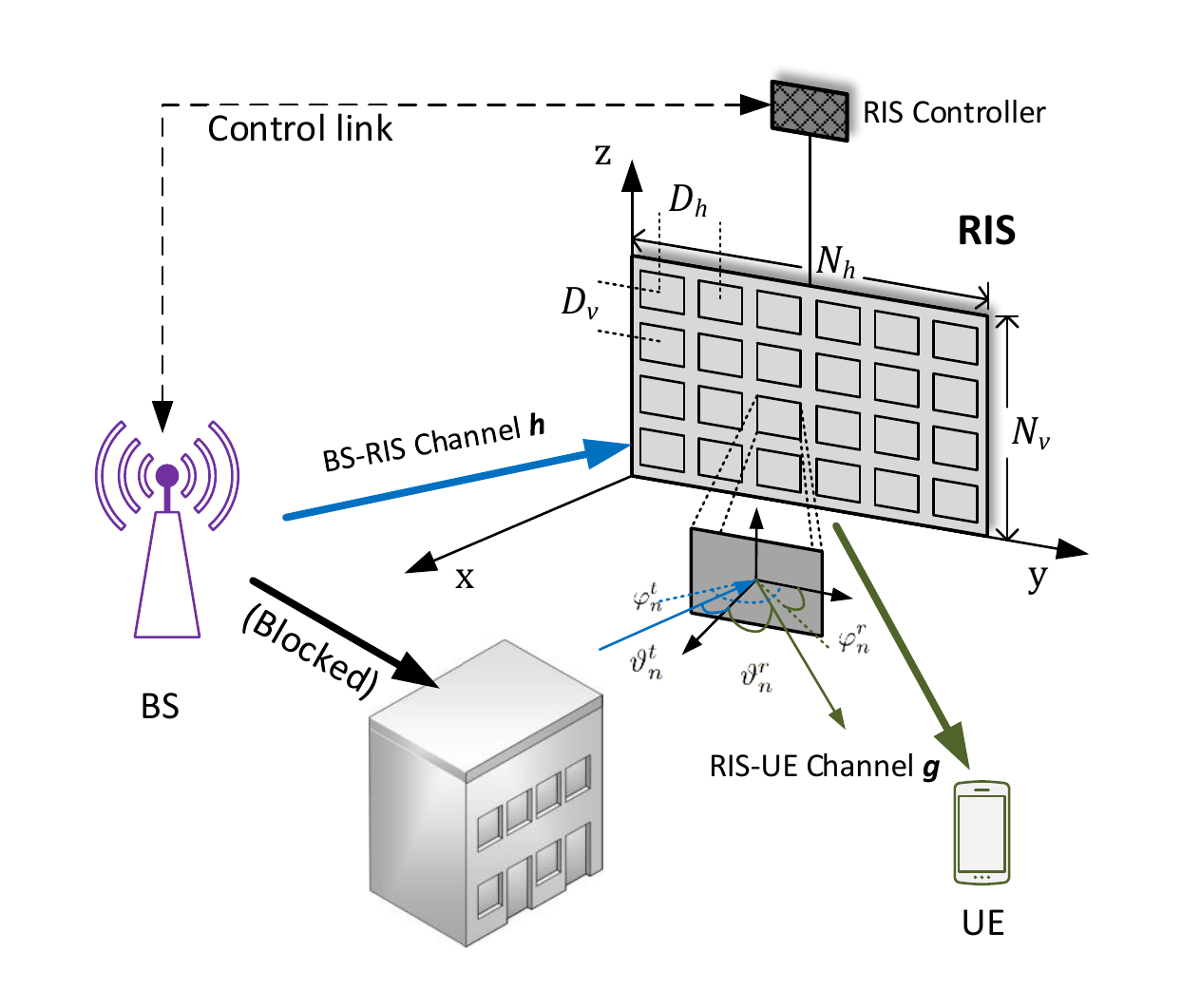}
\caption{RIS-assisted wireless communication system}
\label{fig:RIS-assisted_Communication_System}
\end{figure}

Consider an RIS-aided communication system where the LOS path between the BS and the UE is blocked, as shown in Fig. \ref{fig:RIS-assisted_Communication_System}.
Due to the high transmission frequency of mmWave, the number of paths is much smaller than in sub-6 GHz bands.
The RIS contains $N$ reflecting elements and is deployed on the y-z plane of a Cartesian coordinate system. Without loss of generality, we assume the bottom left corner of the RIS coincides with the origin of the coordinate system.
The RIS is modeled as a uniform planar array (UPA) with $N_h$ columns and $N_v$ rows. The area of each element is $A_u=D_h\times D_v$, where $D_h$ and $D_v$ are the element spacing in the horizontal and vertical directions, respectively.
For each reflecting element, the gain is denoted by $G_u$, and the 
normalized power radiation pattern\footnote{The power radiation pattern quantifies the amount of power radiated or received by an antenna as a function of the direction of observation with respect to the center of the antenna.}  is $F(\vartheta,\varphi)$, with $\vartheta$ and $\varphi$ being the zenith and azimuth angles, respectively.

$F_n^t=F(\vartheta_n^{t},\varphi_n^{t})$ and $F_n^r=F(\vartheta_n^{r},\varphi_n^{r})$ are the normalized power radiation pattern from the $n$-th reflecting unit of the RIS to the BS and from the $n$-th reflecting unit of the RIS to the UE, respectively, where $\vartheta_n^{t}$, $\varphi_n^{t}$, $\vartheta_n^{r}$ and  $\varphi_n^{r}$ are the corresponding zenith and azimuth angles.

We introduce the vectors $\bm{\phi}=[\phi_{1},\ldots,\phi_{\emph{N}}]$ and $\bm{\beta}=[\beta_{1},\ldots,\beta_{\emph{N}}]$, where $\phi_{n}\in[0,2\pi)$ and $\beta_{n}\in[0,1], (n = 1,\ldots,N)$  denote the phase shift and the amplitude of the reflection coefficient of the $n$-th reflecting unit of the RIS, respectively.
The vector of reflection coefficients of the RIS,  denoted by $\bm{\theta}\in \mathbb{C}^{N\times 1}$, is expressed as
\begin{equation}\label{eq:ref-coef vector of the RIS}
\begin{split}
\bm{\theta}&={[\theta_{1},\cdots,\theta_{\emph{N}}]}^T\\
           &={[\beta_{1}e^{j\phi_{1}},\ldots,\beta_{\emph{N}}e^{j\phi_{\emph{N}}}]}^T,
\end{split}
\end{equation}
where $\theta_{n}=\beta_{n}e^{j\phi_{n}}(n=1,\ldots,N)$ is the reflection coefficient of the $n$-th reflection unit of the RIS.

We mainly consider the LOS links between the transceivers and the RIS, which are usually much stronger than other multipath components in the mmWave band \cite{Di:practical2020,Basar:simris2020}.
Without loss of generality, we assume the BS and the UE both have single antenna, yet the proposed method can be readily generalized to multiple-antenna BS/UE cases.
The channel between the BS and the $n$-th reflecting unit is modeled as\cite{Tang:2020wireless}
\begin{equation}\label{eq:BS-n channel}
 h_n=\sqrt{\frac{A_uF^{tx}_nF^t_n}{4\pi d_{t,n}^2}} e^{\frac{-j2\pi d_{t,n}}{\lambda}},
\end{equation}
where $\lambda$ is the wavelength, $F^{tx}_n = F^{tx}(\vartheta_n^{tx},\varphi_n^{tx})$ is the radiation pattern of the transmit antenna from the BS to the $n$-th reflecting element of the RIS with $\vartheta_n^{tx}$ and $\varphi_n^{tx}$ denoting the corresponding zenith and azimuth angles, and $d_{t,n}$ denoting the distance between the BS and the $n$-th reflecting element of the RIS.

The channel vector between the BS and the RIS, which is denoted by $\bm{h}\in \mathbb{C}^{N\times 1}$,  is thus expressed as
\begin{equation}\label{eq:BS-RIS channel}
\begin{split}
\bm{h}&=[h_{1},\cdots,h_{\emph{N}}]^T\\
      &=\resizebox{.8\columnwidth}{!}{$\bigg[\sqrt{\frac{A_uF^{tx}_1F^t_1}{4\pi d_{t,1}^2}} e^{\frac{-j2\pi d_{t,1}}{\lambda}}, \cdots, \sqrt{\frac{A_uF^{tx}_NF^t_N}{4\pi d_{t,N}^2}}e^{\frac{-j2\pi d_{t,N}}{\lambda}} \bigg]^T$}.
\end{split}
\end{equation}

Similarly, the channel between the UE and the $n$-th reflecting element is expressed as
\begin{equation}\label{eq:UE-n channel}
 g_n=\sqrt{\frac{A_rF^{rx}_nF_n^r}{4\pi d_{r,n}^2}} e^{\frac{-j2\pi d_{r,n}}{\lambda}},
\end{equation}
where $A_r$ 
is the aperture of the receiving antenna, $F^{rx}_n = F^{rx}(\vartheta_n^{rx},\varphi_n^{rx})$ is the normalized radiation pattern of the receive antenna from the UE to the $n$-th reflecting element of the RIS with $\vartheta_n^{rx}$ and $\varphi_n^{rx}$ being the zenith and azimuth angles, and $d_{r,n}$ being the distance between the UE and the $n$-th reflecting element of the RIS.

Based on (\ref{eq:UE-n channel}), the LOS channel vector between the UE and all the elements on the RIS is expressed as:
\begin{equation}\label{eq:UE-RIS channel}
\begin{split}
\bm{g}&=[g_{1},\cdots,g_{\emph{N}}]^T\\
      &=\resizebox{.8\columnwidth}{!}{$\bigg[\sqrt{\frac{A_rF^{rx}_1F^r_1}{4\pi d_{r,1}^2}} e^{\frac{-j2\pi d_{r,1}}{\lambda}}, \cdots, \sqrt{\frac{A_rF^{rx}_NF^r_N}{4\pi d_{r,N}^2}}e^{\frac{-j2\pi d_{r,N}}{\lambda}} \bigg]^T$}.
\end{split}
\end{equation}

Given the BS-RIS channel and the RIS-UE channel in (\ref{eq:BS-RIS channel}) and (\ref{eq:UE-RIS channel}), the cascaded BS-RIS-UE channels are expressed as:
\begin{equation}\label{eq:BS-RIS-UE channel}
\begin{split}
   w & = \left((\sqrt{G_r}\bm{g})\odot(\sqrt{G_t}\bm{h})\right)^T(\sqrt{G_u}\bm{\theta})\\
     & = \sqrt{G_rG_uG_t}(\bm{g}\odot\bm{h})^T\bm{\theta}\\
     & =\frac{\sqrt{G_rG_uG_tA_rA_u}}{4\pi}\sum_{n=1}^{N}\frac{\sqrt{\tilde{F}_n}\theta_n}{d_{t,n}d_{r,n}}   e^{\frac{-j2\pi(d_{t,n}+d_{r,n})}{\lambda}}
\end{split}
\end{equation}
where $G_t$ denotes the transmit antenna gain, $G_r$ denotes the receive antenna gain and $\tilde{F}_n=F^{tx}_nF^t_nF^{rx}_nF^r_n$ accounts for the effect of the normalized power radiation patterns on the received signal power.

The signal received by the UE is
\begin{equation}\label{eq:y-signal recevied}
\begin{split}
r &= sw +\eta\\
  &=s\frac{\sqrt{G_rG_uG_tA_rA_u}}{4\pi}\sum_{n=1}^{N}\frac{\sqrt{\tilde{F}_n}\theta_n}{d_{t,n}d_{r,n}}   e^{\frac{-j2\pi(d_{t,n}+d_{r,n})}{\lambda}}+\eta
\end{split}
\end{equation}
where $s$ is the transmitted signal by the BS and $\eta \sim \mathcal{CN}(0,\sigma^2)$ denotes the additive white Gaussian noise (AWGN) at the UE whose variance is $\sigma^2$.

Our target is to maximize the power of the UE received signal $P_r$ by designing the reflection coefficient vector $\bm{\theta}$ based on the estimated CSI. The problem can be formulated as
\begin{equation}\label{eq:ref-coef opt}
\check{\bm{\theta}} =\argmax\limits_{\bm{\theta}}\{|(\bm{g}\odot\bm{h})^T\bm{\theta}|^2\}.
\end{equation}
In principle, if the CSI is perfectly known, the optimal reflection coefficient of the $n$-th reflecting element of the RIS is
\begin{equation}\label{eq:opt solution of the ref-coef}
\check{\theta}_{n}=\frac{g_n^*h_n^*}{|g_nh_n|},n=1,\cdots,N.
\end{equation}

However, the CSI has to be acquired beforehand. Due to the huge number of RIS reflecting elements and the unavailability of signal processing units at the RIS, the CSI acquisition process is a challenging task. Moreover, the mobility of the UE make it imperative to acquire timely CSI. In the following section, we will show an efficient channel estimation method to address these issues.

\section{Proposed CSI acquisition Scheme based on 3D localization}\label{sec:proposed}
In this section, we show the localization-based channel estimation method.  The proposed approach capitalizes on two properties of mmWave channels:
1) The structural characteristics of the mmWave channel that is dominated by the LOS path; 
2) The different distances between the reflecting elements and the UE. 
Since the positions of the RIS and the BS are generally fixed, the BS-RIS channel is quasi-static and is assumed known. Our target is to estimate the RIS-UE channel 
by a two-step 3D positioning method, which is based on a novel concept named reflecting unit set (RUS), with an illusition shown in Fig. \ref{fig:RIS example}. These RUSs serve as \emph{anchor nodes} to locate the position of the UE. 
More details are given below. 

\subsection{Reflecting Unit Set}\label{sec:RUS}
\begin{figure}[!t]
\centering
\includegraphics[width=3.4in]{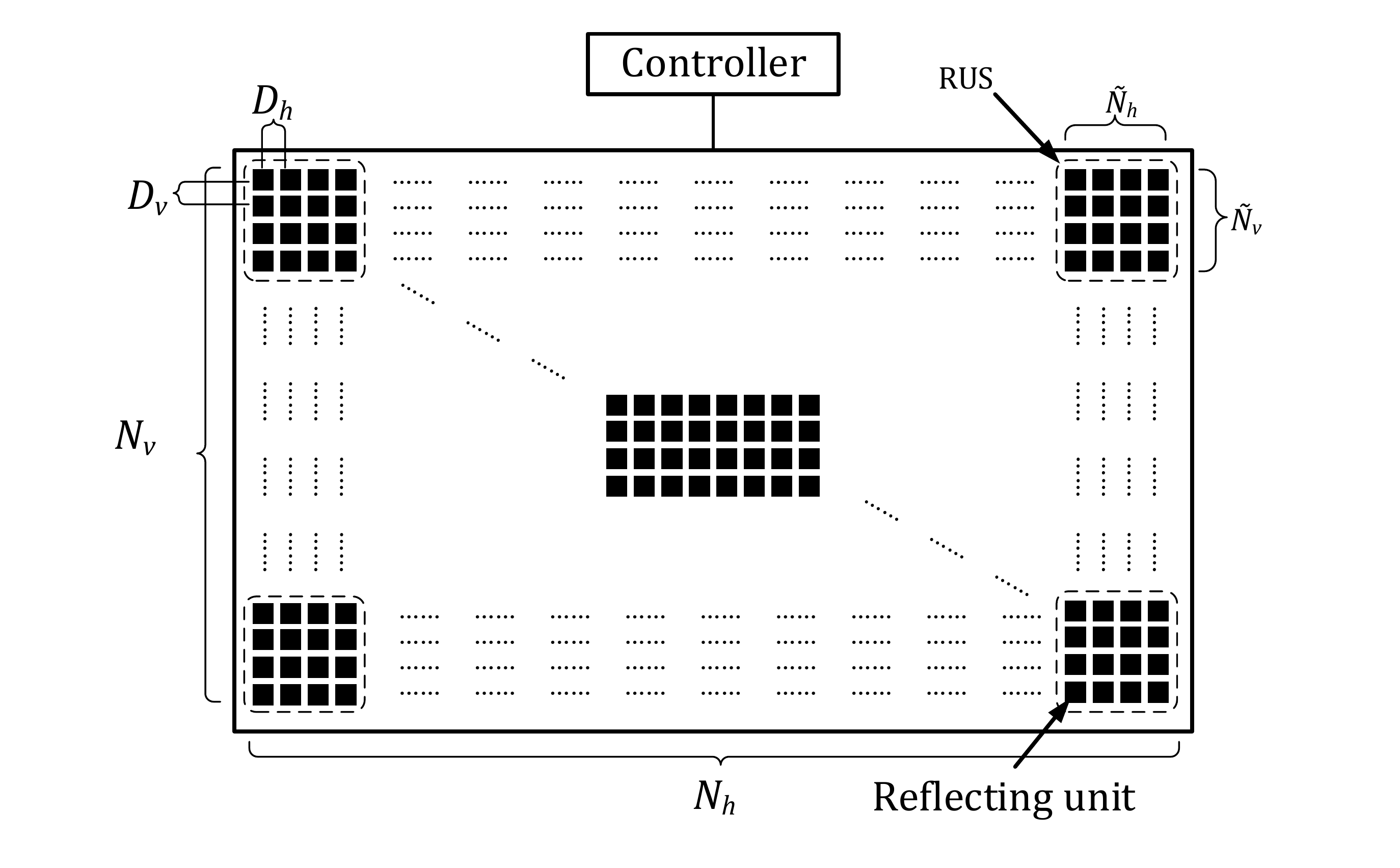}
\caption{The model of RIS with RUSs}
\label{fig:RIS example}
\end{figure}

RUS is a subset of the reflecting elements in a certain part of the RIS. An RUS can beamform the incident signal to the receiver, provided that a proper set of phase shifts is applied to the elements of the RUS. 
Without loss of generality, all $M$ RUSs are assumed to have the same shape, that is, a rectangular structure with $\tilde{N} = \tilde{N}_h\times \tilde{N}_v$ reflecting units, where $\tilde{N}_h$ and $\tilde{N}_v$ denote the numbers of columns and rows of the reflecting elements within one RUS, respectively.

The key properties of an RUS are explained follows:

\subsubsection{Active State}
An RUS is in the \emph{active} state when the reflecting elements inside the RUS reflect the incident signals. If the state is \emph{inactive}, on the other hand, the elements 
do not reflect the signal of interest, i.e., the electromagnetic waves are absorbed.  
Examples of metamaterials with these property are given in \cite{f_imani_perfect_2020,Li:2020programmable}. 
In addition, the elements of the RUS may be equipped with amplifiers in order to enhance the reflected signal\cite{Zhang:Active}. 

\subsubsection{Codebook and Codeword}
In order to beamform the incident wave to a certain direction, a codebook is introduced, which contains a set of codewords. Each codeword serves as a vector of the reflection coefficients.
Since the shape of the RUS is an UPA, we adopt the 3GPP codebook \cite{3gpp:38.214}, which is based on discrete Fourier transform (DFT) 
\begin{align}
\bm{u}_{p}&= \begin{bmatrix}1 \quad e^{j\frac{2\pi p}{O_1 \tilde{N}_v}} \quad \cdots \quad e^{j\frac{2\pi p(\tilde{N}_v-1)}{O_1\tilde{N}_v}} \end{bmatrix},\\
\bm{v}_{l,p}&=\begin{bmatrix}\bm{u}_p \quad e^{j\frac{2\pi l}{O_2\tilde{N}_h}}\bm{u}_p \quad \cdots \quad e^{j\frac{2\pi l(\tilde{N}_h-1)}{O_2\tilde{N}_h}}\bm{u}_p \end{bmatrix}^{T}.
\end{align}
The vectors $\bm{v}_{l,p}\in \mathbb{C}^{\tilde{N}_h\tilde{N}_v\times 1}$ are the codewords, where $p = 1, \cdots, \tilde{N}_v$ and $l = 1, \cdots, \tilde{N}_h$ are the indices of the vertical and horizontal beam directions, respectively. The positive integers $O_1$ and $O_2$ are the oversampling factors in the vertical and horizontal directions respectively.
We set the order of the reflecting units of the RUS as follows: starting from the lower left corner of the RUS, the number increases along the first column, then the second, etc, until the $\tilde{N}_h$-th column.
The relationship between the indices of the vertical and horizontal beam directions and the index of the codeword (denoted by $\tilde{n} \in [1,\tilde{N}]$) can be expressed as:
\begin{equation}\label{eq:index of codeword}
    \tilde{n} = (l-1) \tilde{N}_v+p,
\end{equation}
and then the $\tilde{n}$-th codeword can be defined as
$\tilde{\bm{\theta}}_{\tilde{n}} \triangleq\bm{v}_{l,p}$.
The corresponding codebook, denoted by $\tilde{\bm{\Theta}}$, can be expressed as
\begin{equation}\label{eq:codebook-RUS}
  \tilde{\bm{\Theta}} = [\tilde{\bm{\theta}}_1,...,\tilde{\bm{\theta}}_{\tilde{N}}].
\end{equation}

\subsubsection{Selection of Codeword}
The RUS can be seen as a small RIS. According to (\ref{eq:BS-RIS-UE channel}), the cascaded BS-RUS-UE channel  corresponding to the $\tilde{n}$-th codeword is expressed as
\begin{equation}\label{eq:y-mth RUS model}
  \tilde{w}_{\tilde{n}} =  \sqrt{G_rG_uG_t}(\tilde{\bm{g}}\odot\tilde{\bm{h}})^T\tilde{\bm{\theta}}_{\tilde{n}},
\end{equation}
where $\tilde{\bm{h}}$ and $\tilde{\bm{g}}$, derived from (\ref{eq:BS-n channel})-(\ref{eq:BS-RIS channel}) and (\ref{eq:UE-n channel})-(\ref{eq:UE-RIS channel}) respectively, denote the BS-RUS channel and the RUS-UE channel.
Based on (\ref{eq:y-signal recevied}), the received signal at the UE via the RUS is
\begin{equation}\label{eq:received signal via RUS}
  \tilde{r}_{\tilde{n}}=s\tilde{w}_{\tilde{n}}+\eta.
\end{equation}
The codeword that maximizes the power of the UE received signal is chosen, and the index of the selected codeword can be expressed as
\begin{equation}\label{eq:codeword selection}
  \tilde{n}_{opt} = \argmax\limits_{\tilde{n}}\{\tilde{r}_{\tilde{n}}\}.
\end{equation}
Note that several trials are needed to find the optimal codeword. An ordinary strategy is traversing all codewords in the codebook $\tilde{\bm{\Theta}}$ to select the codeword. The same selected codeword can be used in a certain time interval, and even re-used by all the RUSs. This will help reduce the complexity at the cost of a weaker pilot signal reception at the UE. 

Note also that the shape of the RUS is not limited to UPA, and the size of the RUS can be customized. Smaller RUS may provide better accuracy when the reflected signal is strong enough for the UE to acquire good channel estimation quality. However, when the desired reflected signal is weak, for example due to large pathloss, the area of the RUS should be larger, or the RUSs should be equipped with power amplifiers. 

We then estimate the RUS-UE distances based on the angle-delay structure of the wideband mmWave channels. 

\subsection{RUS-UE Distance Estimations}\label{sec:multi-subbands}
Due to the sparse distribution of the multipath delays, the channel covariance matrix in the frequency domain has a low rank. In this paper, we exploit this low-rankness property to increase the accuracy of the delay estimation. 
With ultra-wide bandwidth, mmWave communications can offer higher resolution of the delay and improve the accuracy of the distance estimation. In this section,  we propose a distance estimation method exploiting the low rank of the frequency-domain channel covariance matrix.

Denote the central frequency by $F_c$. The whole bandwidth contains $K$ sub-bands, where the center frequency and the wavelength of the $k$-th sub-band are $f_k$ and $\lambda_k=\frac{c}{f_k}$ respectively ($c$ denotes the speed of light).
The vector of frequencies is denoted as $\bm{f}=[f_1,...,f_K]$.
The power spectral density of the noise is $\bar{\sigma}^2$, and the power of the noise at each sub-band is $\sigma_{n}^2 = \bar{\sigma}^2 f_d$, where $f_d$ is the bandwdith of each sub-band. 
Without loss of generality, we assume equal power allocation in all sub-bands, and the transmitted power of a sub-band is $\sigma_t^2 = P_t/K$, where $P_t$ is the total transmit power. 

The received signal that is reflected by the $m$-th RUS (denoted as $\text{RUS}_m$) at sub-band $k$ is 
\begin{equation}\label{eq:y-mk}
  \tilde{r}_{m,k} = s_{m,k}\tilde{w}_{m,k}+\eta_{m,k},
\end{equation}
where $\tilde{s}_{m,k}$, $\tilde{w}_{m,k}$ and $\eta_{m,k}\sim \mathcal{CN}(0,\sigma_{n}^2)$ denote the transmitted signal, the cascaded channel via $\text{RUS}_m$ at $k$-th sub-band and the AWGN at the UE, respectively.
We show more details of several distance estimation methods  below.

\subsubsection{Matrix Pencil (MP)-based Method}

According to the low-rank property and the wideband setting, we develop a Matrix Pencil \cite{Hua:1990matrixpencil} based distance estimation  method. Considering the cascaded channel via the $m$-th RUS with $\tilde{N}$ units and the proper codeword, the relationship between the wavelength of the $K$ sub-bands can be expressed as
\begin{equation}\label{eq:MP-wavelength}
  \lambda_k = \frac{c}{f_1+(k-1)f_d},k=1,\ldots,K.
\end{equation}
Then based on the one-dimensional MP method\cite{Abdo:2015TwoDMP}, the estimated distance between the UE and the $m$-th RUS can be expressed as

\begin{equation}\label{eq:MP-d}
    \hat{d}_m = \frac{1}{\tilde{N}}\sum_{\tilde{n}=1}^{\tilde{N}}\frac{c\log \ddot{z}_{\tilde{n}}}{-j2\pi f_d}-d_{bs,m}
\end{equation}
where
\begin{equation}\label{eq:MP-zn}
  \ddot{z}_{\tilde{n}} = e^{\frac{-j2\pi(d_{t,\tilde{n}}+d_{r,\tilde{n}})f_d}{c}}, \tilde{n}=1,\dots,\tilde{N},
\end{equation}
denotes the eigenvalue of the Hankel matrix product for matrix pencil method 
and $d_{bs,m}$ denotes the BS-$\text{RUS}_m$ distance. 

\subsubsection{Conventional MMSE-based Method}\label{sec:MMSE-independ}
Adopting the conventional MMSE-based estimation criterion, we show the MMSE channel estimation independently for each sub-band, when the $\text{RUS}_m$ is activated and a proper codeword is chosen.

Based on (\ref{eq:y-mk}), with the MMSE method, the estimated cascaded channel of the $k$-th sub-band via the $m$-th RUS is expressed as
\begin{equation}\label{eq:H-mk MMSE}
  \hat{w}_{m,k} = R_{m,k}s_{m,k}^*(\sigma_t^2R_{m,k}+\sigma_n^2)^{-1}\tilde{r}_{m,k},
\end{equation}
where $R_{m,k} = \mathbb{E}[\tilde{w}_{m,k}\tilde{w}_{m,k}^*]$ denotes the covariance coefficient of the cascaded channel of the $k$-th sub-band and the $m$-th RUS.
The mean squared error (MSE) of the estimation is
\begin{equation}\label{eq:Hm-MSE }
\begin{split}
\mathcal{M}_{m,k}
& = \mathbb{E}\{\|\hat{w}_{m,k}-\tilde{w}_{m,k}\|^2\} \\
& = R_{m,k}(1+\frac{\sigma_t^2}{\sigma_n^2}R_{m,k})^{-1}\\
& = \frac{R_{m,k}\sigma_n^2}{\sigma_t^2R_{m,k}+\sigma_n^2}
\end{split},
\end{equation}

Based on (\ref{eq:H-mk MMSE}), the estimated cascaded channel vector of the $m$-th RUS is expressed as
\begin{equation}\label{Eq:Hm est inden}
\hat{\bm{w}}_m = [\hat{w}_{m,1}, \hat{w}_{m,2}, \ldots, \hat{w}_{m,K}]^T.
\end{equation}
and the distance between the UE and the $\text{RUS}_m$ can be acquired by Bartlett's method\cite{EH:2012bartlett}.

\subsubsection{Proposed Joint MMSE (JMMSE)-based Method}
In order to give full play to the role of ultra-wideband, we develop a JMMSE-based distance estimation method which is able to exploit the low-rankness structure of the covariance matrix in the frequency domain\cite{yin:13a}. 
When $\text{RUS}_m$ is in the active state with a proper codeword selected, we estimate the corresponding RUS-UE distance with the JMMSE channel estimation of all sub-bands.

The BS-$\text{RUS}_m$-UE channel  at all $K$ sub-bands is written as
\begin{equation}\label{Eq:channel broadband}
\tilde{\bm{w}}_m = [\tilde{w}_{m,1}, \tilde{w}_{m,2}, \ldots, \tilde{w}_{m,K}]^T,
\end{equation}
where $\tilde{w}_{m,k}(k = 1, \cdots, K)$ denotes the BS-$\text{RUS}_m$-UE channel at sub-band $k$.

The received signal via $\text{RUS}_m$ at all $K$ sub-bands is
\begin{equation}\label{eq:Ym - received signal with m-th RUS}
  \tilde{\bm{r}}_m = [\tilde{r}_{m,1},\ldots,\tilde{r}_{m,K}]^T = \bm{S}_m\tilde{\bm{w}}_m + \bm{\eta}_m,
\end{equation}
where  $\bm{S}_m = \text{diag}([s_{m,1},\ldots,s_{m,K}]^T)$ and $\bm{\eta}_m=[\eta_{m,1},\ldots,\eta_{m,K}]^T$ denote the transmitted signal vector of all $K$ sub-bands via $\text{RUS}_m$ and the noise vector respectively. 

With the JMMSE estimation method, the estimated cascaded channel via $\text{RUS}_m$ is given by
\begin{equation}\label{eq:Hm-Channel MMSE est}
   \hat{\bm{w}}_m  =\bm{R}_m(\bm{R}_m+\frac{\alpha\sigma_n^2}{\sigma_t^2}\bm{I})^{-1}\bm{S}_m^{-1}\tilde{\bm{r}}_m
\end{equation}
where $\bm{R}_m=\mathbb{E}[\tilde{\bm{w}}_m\tilde{\bm{w}}_m^H]$ is the frequency-domain channel covariance matrix and $\alpha$ is a parameter of regularization.
Denote the true path delay by $t_m$, the estimated path delay of the BS-$\text{RUS}_m$-UE channel is computed by
\begin{equation}\label{eq:delay of BS-RUS-UE}
\begin{split}
   \hat{t}_m &=\argmax_{t_m}\{\|\bm{\hat{w}}_m^T\bm{b}(t_m)\|^2\}\\
     &= t_m + \Delta t_m ,
\end{split}
\end{equation}
where $\Delta t_m$ is the estimation error and $\bm{b}(t)\in\mathbb{C}^{K\times 1}$ is the delay response vector defined as
\begin{equation}\label{eq:b(t)}
\bm{b}(t)=[e^{j2\pi f_1t},\cdots,e^{j2\pi f_Kt}]^T.
\end{equation}

The distance between the UE and $\text{RUS}_m$ is $d_m=c (t_m-t_{bs,m})$, and the estimated one is expressed as
\begin{equation}\label{eq:d-UE-RUS}
\begin{split}
   \hat{d}_{m} &= c(\hat{t}_m - t_{bs,m}) \\
     &= d_m + \Delta d_m,
\end{split}
\end{equation}
where $\Delta d_m=c \Delta t_m$ denotes the estimated error of the distance of UE-$\text{RUS}_m$ channel,
and $t_{bs,m}$ denotes the true delay of BS-$\text{RUS}_m$ channel. Since the BS and the RIS are generally fixed, their locations and the delay $t_{bs,m}$ can be assumed known.

\subsection{Proposed Coplanar ML-based 3D Localization Method}\label{sec:localization-alg}
Since all the anchors (i.e., RUSs) of the RIS are on the same plane (y-z plane), most trilateration localization methods fail to work in this case, such as the least squares matrix based method \cite{KW:2004LSforToA}.
The authors of \cite{Chan:1994simple} and \cite{Chan:2006exact} gave alternate ML-based solutions for TDOA and TOA estimation in 2D settings with col-linear anchors.
In this section, we propose a novel coplanar maximum likelihood-based (CML) 3D localization method which is dedicated to channel estimation in RIS-aided mmWave communication systems.

Let $\bm{p} = (x,y,z)$ denote the position of the UE. The vector of the estimated distances between all $M$ RUSs and the UE is written as
\begin{equation}\label{eq:d-est vector}
   \hat{\bm{d}}  = [\hat{d}_1,\hat{d}_2,\ldots,\hat{d}_M]^T = \bm{d} + \bm{e}, \\
\end{equation}
where $\bm{d} = [d_1,d_2,\ldots,d_M]^T$ is the vector of the corresponding true distances, and $\bm{e} = [e_1,\ldots,e_M]^T $ is the vector of the additive estimation errors. 

Based on (\ref{eq:Ym - received signal with m-th RUS}) and (\ref{eq:Hm-Channel MMSE est}), the estimated error of the BS-$\text{RUS}_m$-UE channel is
\begin{equation}\label{eq:err_mChannel}
\begin{split}
   \bm{e}_{wm} & = \hat{\bm{w}}_m-\bm{w}_m \\
     & =(\bm{A}- \bm{I})\bm{w}_m + \bm{A}\bm{S}^{-1}\bm{\eta}_m
\end{split}
\end{equation}
where $\bm{A} = \bm{R}_m(\bm{R}_m+\frac{\alpha\sigma_n^2}{\sigma_t^2}\bm{I})^{-1}$.
When the SNR is high, $\bm{A}$ is close to $\bm{I}$, and then $\bm{e}_{wm}$ follows a zero-mean Gaussian distribution. Based on this fact, the entry of $\bm{e}$ is modeled as a Gaussian distribution $e_m\sim \mathcal{CN}(0,\sigma_{d}^2),m=1,\ldots, M$ where $\sigma_d^2$ is the variance. The conditional probability density function (PDF) of the estimated distance $\hat{\bm{d}}$ is thus
\begin{equation}\label{eq:pdf of d est}
  P(\hat{\bm{d}}|\bm{p}) = (2\pi \sigma_d^2)^{-\frac{M}{2}}\text{exp}\bigg(-\frac{J}{2\sigma_d^2}\bigg),
\end{equation}
where
\begin{equation}\label{eq:J-ML}
  J=\sum_{m=1}^{M}(d_m-\hat{d}_m)^2.
\end{equation}
According to the maximum likelihood criterion, finding $\bm{p}$ from $\hat{\bm{d}}$ is to equivalent to minimizing (\ref{eq:J-ML}).

The coordinates of the four given coplanar RUSs are $\bm{p}_1=(0,0,0)$, $\bm{p}_2=(0,a,0)$, $\bm{p}_3=(0,a,b)$ and $\bm{p}_4=(0,0,b)$, where $a$ and $b$ are real. The estimated distances between each anchor node and the UE to be localized are $\hat{d}_m(m=1,\ldots,4)$. Then, the true distances are calculated by

\begin{align}
  d_1 & =(x^2 + y^2 + z^2)^{\frac{1}{2}}, \label{eq:d-1}\\
  d_2 & =[x^2 + (y-a)^2 + z^2]^{\frac{1}{2}}, \label{eq:d-2}\\
  d_3 & =[x^2 + (y-a)^2 + (z-b)^2]^{\frac{1}{2}}, \label{eq:d-3}\\
  d_4 & =[x^2 + y^2 + (z-b)^2]^{\frac{1}{2}}.\label{eq:d-4}
\end{align}
The closed-form solutions for minimizing (\ref{eq:J-ML}) are shown in Theorem \ref{theo:ML}.

\begin{theorem}\label{theo:ML}
The maximum likelihood (ML) estimation of the distances are
\begin{align}
  d_1 & =\frac{\hat{d}_1\bigg (\hat{d}_1^2+\hat{d}_3^2\pm\sqrt{(\hat{d}_1^2+\hat{d}_3^2)(\hat{d}_2^2+\hat{d}_4^2)} \bigg)}{2(\hat{d}_1^2+\hat{d}_3^2)}, \label{eq:d-ml-1}\\
  d_2 & =\frac{\hat{d}_2d_1}{2d_1-\hat{d}_1},\label{eq:d-ml-2} \\
  d_3 & = \frac{\hat{d}_3d_1}{\hat{d}_1},\label{eq:d-ml-3}\\
  d_4 & = \frac{\hat{d}_4d_1}{2d_1-\hat{d}_1},\label{eq:d-ml-4}
\end{align}
\end{theorem}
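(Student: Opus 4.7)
The plan is to recast the ML problem, which is the minimization of $J$ over the UE position $(x,y,z)$, as a constrained optimization directly in the distance variables $(d_1,d_2,d_3,d_4)$, and then invoke Lagrange multipliers exploiting the symmetric rectangular layout of the four anchors. The first step would be to observe that although $J$ depends on four distances but only three position coordinates, these are not independent: subtracting (\ref{eq:d-1}) from (\ref{eq:d-2}) gives $d_2^2-d_1^2 = a^2 - 2ay$, while subtracting (\ref{eq:d-4}) from (\ref{eq:d-3}) gives $d_3^2-d_4^2 = a^2 - 2ay$ as well. Hence any quadruple realizable by an actual coplanar-rectangle target satisfies the single compatibility constraint
\begin{equation*}
d_1^2 + d_3^2 \;=\; d_2^2 + d_4^2,
\end{equation*}
which is the algebraic expression of the fact that $\bm{p}_1,\bm{p}_2,\bm{p}_3,\bm{p}_4$ are the corners of a rectangle. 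Minimizing $J$ over $(x,y,z)$ is therefore equivalent to minimizing $J(d_1,\ldots,d_4) = \sum_{m=1}^4 (d_m-\hat{d}_m)^2$ over positive $d_m$ under this one quadratic constraint.

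Next, I would introduce a Lagrange multiplier $\lambda$ for the constraint and write out the four stationarity conditions $\partial \mathcal{L}/\partial d_m=0$, which take the particularly clean form
\begin{equation*}
(1+\lambda)d_1=\hat{d}_1,\quad (1-\lambda)d_2=\hat{d}_2,\quad (1+\lambda)d_3=\hat{d}_3,\quad (1-\lambda)d_4=\hat{d}_4.
\end{equation*}
Dividing the first and third immediately yields $d_3 = \hat{d}_3 d_1/\hat{d}_1$, which is (\ref{eq:d-ml-3}); dividing the second and fourth gives $d_4 = \hat{d}_4 d_2/\hat{d}_2$. Using the identity $2d_1-\hat{d}_1 = \hat{d}_1(1-\lambda)/(1+\lambda)$ (obtained by eliminating $\lambda$ from the first stationarity equation), the pair $d_2,d_4$ can be rewritten in the form (\ref{eq:d-ml-2}) and (\ref{eq:d-ml-4}), matching the statement.

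To obtain (\ref{eq:d-ml-1}), I would substitute the four stationarity relations into the compatibility constraint to get
\begin{equation*}
(1-\lambda)^2(\hat{d}_1^2+\hat{d}_3^2) \;=\; (1+\lambda)^2(\hat{d}_2^2+\hat{d}_4^2),
\end{equation*}
take square roots (which introduces the $\pm$ branch visible in the theorem), solve the resulting linear equation for $\lambda$, and substitute back into $d_1=\hat{d}_1/(1+\lambda)$; routine simplification then produces exactly the expression on the right-hand side of (\ref{eq:d-ml-1}).

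The main obstacle I anticipate is not the calculus itself but the two structural points around it: first, showing that the rectangular identity $d_1^2+d_3^2=d_2^2+d_4^2$ is both necessary \emph{and} sufficient for the quadruple to be realized by some $(x,y,z)$, so that the constrained problem is genuinely equivalent to the original one; and second, selecting the correct branch of the $\pm$ sign so that all four estimates remain positive and correspond to a global minimum of $J$ rather than a saddle (this amounts to checking that $|\lambda|<1$ at the chosen branch, equivalently that the numerator in (\ref{eq:d-ml-1}) is positive, which is where the physical assumption that the measurement errors are small compared with the true distances enters). The remaining algebra to convert $1/(1\pm\lambda)$ into the ratio form stated in the theorem is bookkeeping.
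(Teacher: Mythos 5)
Your proof is correct and reaches exactly the formulas of the theorem, but it takes a different mechanical route from the paper. The paper differentiates $J$ directly with respect to $(x,y,z)$, obtaining (\ref{eq:J_diff x})--(\ref{eq:J_diff z}); setting these to zero and combining them forces the residual ratios $c_m=(d_m-\hat{d}_m)/d_m$ to satisfy $c_1=c_3=-c_2=-c_4$, which is precisely your multiplier structure $(1+\lambda)d_1=\hat{d}_1$, $(1+\lambda)d_3=\hat{d}_3$, $(1-\lambda)d_2=\hat{d}_2$, $(1-\lambda)d_4=\hat{d}_4$ with $\lambda = -c_1$; it then closes the system with the rectangle identity (\ref{eq:d total}) and solves the resulting quadratic, exactly as you do. So the two arguments share the same skeleton (stationarity plus the identity $d_1^2+d_3^2=d_2^2+d_4^2$), and your Lagrangian reformulation makes the diagonal-pair symmetry of the stationarity conditions transparent, whereas the paper's ``combining (\ref{eq:J_diff x})--(\ref{eq:J_diff z})'' step is terser. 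What your route buys in clarity it pays for in an extra obligation that the paper's route avoids: by relaxing the problem to all positive quadruples satisfying the single quadratic constraint, you enlarge the feasible set beyond the realizable one (realizability additionally requires $d_1^2\ge y^2+z^2$ so that a real $x$ exists), so you must argue that the stationary quadruple you find is in fact realizable by a point with $x>0$ --- the constraint is necessary but not by itself sufficient, contrary to what your equivalence claim needs. You correctly flag this, and it holds in the small-error regime of interest, but it should be stated as an additional hypothesis rather than something to be ``shown.'' The branch-selection issue ($|\lambda|<1$, i.e.\ positivity of all four estimates) is handled in the paper only by the a posteriori rule of picking the positive root with smaller $J$, which matches your observation.
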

\begin{proof}
To minimize (\ref{eq:J-ML}), we differentiate $J$ with respect to $x$, $y$ and $z$ and set the results to zero
\begin{align}
  \frac{\partial J}{\partial x} & = \sum_{m=1}^{4}\frac{2x(d_m-\hat{d}_m)}{d_m} \label{eq:J_diff x}\\
  \frac{\partial J}{\partial y} & = \sum_{m=1}^{4}\frac{2y(d_m-\hat{d}_m)}{d_m}-\sum_{m=2,3}\frac{2a(d_m-\hat{d}_m)}{d_m} \label{eq:J_diff y} \\
  \frac{\partial J}{\partial z} & = \sum_{m=1}^{4}\frac{2z(d_m-\hat{d}_m)}{d_m}-\sum_{m=2,4}\frac{2b(d_m-\hat{d}_m)}{d_m} \label{eq:J_diff z}
\end{align}
Combining (\ref{eq:J_diff x}) - (\ref{eq:J_diff z}), $d_2,d_3\ \text{and}\ d_4$ can be formulated in terms of $d_1$
\begin{equation}\label{eq:d_234}
  d_2  = \frac{\hat{d}_2 d_1}{2d_1-\hat{d}_1},\ d_3  = \frac{\hat{d}_3 d_1}{\hat{d}_1},\ d_4 = \frac{\hat{d}_4 d_1}{2d_1-\hat{d}_1}
\end{equation}

From (\ref{eq:d-1})-(\ref{eq:d-4}), the relationship among $d_1, d_2,d_3\ \text{and}\ d_4$ can be expressed as
\begin{equation}\label{eq:d total}
  d_1^2+d_3^2-d_2^2-d_4^2=0.
\end{equation}
Then, substituting (\ref{eq:d_234}) into (\ref{eq:d total}), we have
\begin{equation}\label{eq:d1}
  \frac{4K_{13}d_1^4}{\hat{d}_1^2}-\frac{4K_{13}d_1}{\hat{d}_1^3}+(K_{13}-K_{24})d_1^2=0
\end{equation}
where $K_{13}=\hat{d}_1^2+\hat{d}_3^2>0$ and $K_{24}=\hat{d}_2^2+\hat{d}_4^2>0$.
Since $d_m>0$, (\ref{eq:d1}) can be solved by
\begin{equation}\label{eq:d1_result}
\begin{split}
   d_1 & =\frac{\hat{d}_1\big (K_{13}\pm\sqrt{K_{13}K_{24}} \big)}{2K_{13}} \\
     & =\frac{\hat{d}_1\bigg (\hat{d}_1^2+\hat{d}_3^2\pm\sqrt{(\hat{d}_1^2+\hat{d}_3^2)(\hat{d}_2^2+\hat{d}_4^2)} \bigg)}{2(\hat{d}_1^2+\hat{d}_3^2)}.
\end{split}
\end{equation}

Finally, $d_2,d_3,d_4$ can be derived from (\ref{eq:d_234}) and (\ref{eq:d1_result}).
\end{proof}

Since at least one root of $d_1$ is positive, we choose the correct root as follows. If only one root is positive, it is the value we needed. If all roots are positive, the one corresponding to the smaller $J$ in (\ref{eq:J-ML}) is selected.

According to (\ref{eq:d-1})-(\ref{eq:d-4}), Theorem \ref{theo:ML}, and Euclidean geometry, the  3D position $\bm{p}=  (x, y, z)$ of the UE can be solved by
\begin{align}
     y & =  \frac{\bar{d}_{1}^2-\bar{d}_{2}^2-\bar{d}_{3}^2+\bar{d}_{4}^2+2a^2}{4a},\label{eq:p-esty}\\
     z & = \frac{\bar{d}_{1}^2+\bar{d}_{2}^2-\bar{d}_{3}^2-\bar{d}_{4}^2+2b^2}{4b} ,\label{eq:p-estz} \\
     x & = \sqrt{\bar{d}_{m}^2-\big(y-\bm{p}_{m}(y)\big)^2-\big(z-\bm{p}_{m}(z)\big)^2},\label{eq:p-estx}
\end{align}
where $m=1,2,3,4$. We assume the UE and the BS are in the front side of the RIS, which means $\bar{x}>0$, as shown in Fig. \ref{fig:RIS-assisted_Communication_System}. The CML method selects the solution of $\bm{p} = (x,y,z)$ that gives the minimum $J$.

Note that the constraints on the positions of the RUSs can be relaxed, i.e., they are not necessarily on the four corners of the RIS, and our method can be extended easily if we transform the basis of the y-z plane according to the properties of the linear transformation space. 

The closed-form Fisher Information Matrix and the Cramer-Rao lower bounds (CRLBs) of our 3D positioning method are shown in Theorem \ref{theo:CRLB}.

\begin{theorem}\label{theo:CRLB}
The elements of Fisher Information Matrix (FIM) denoted by $\bm{\Psi}$  is expressed as
\begin{equation}\label{eq:FIM-elements-2}
  \begin{split}
     \bm{\Psi}_{ij} & = -\mathbb{E}\bigg[\frac{\partial^2\ln P(\hat{\bm{d}}|\bm{p})}{\partial\bm{p}(i)\partial\bm{p}(j)}\bigg] \\
       & =\sum_{m=1}^{M}\frac{1}{\sigma_d^2}\frac{\big(\bm{p}(i)-\bm{p}_{m}(i)\big)\big(\bm{p}(j)-\bm{p}_{m}(j)\big)}{d_m^2},
  \end{split}
\end{equation}
where $i,j=1,2,3,$ and $\bm{p}_m=(x_m,y_m,z_m)$ denotes the location of the $m$-th anchor node.

Then, the CRLB matrix of the CML 3D localization method is
\begin{equation}\label{eq:CRLB}
  \bm{C}=
  \left [\begin{matrix}
    C_{11} & C_{12} & C_{13} \\
    C_{21} & C_{22} & C_{23} \\
    C_{31} & C_{32} & C_{33}
  \end{matrix}\right ]=\bm{\Psi}^{-1},
\end{equation}
where $C_{11}$ , $C_{22}$ and $C_{33}$ denote the CRLBs of the estimation of the (x, y, z) coordinates respectively.
\end{theorem}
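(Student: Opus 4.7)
The plan is to derive the FIM directly from the conditional PDF in (\ref{eq:pdf of d est}) by the standard recipe (log-likelihood, two differentiations, expectation) and then invoke the usual CRLB identity to get (\ref{eq:CRLB}). The starting point is
\begin{equation*}
\ln P(\hat{\bm{d}}|\bm{p}) = -\tfrac{M}{2}\ln(2\pi\sigma_d^2) - \tfrac{1}{2\sigma_d^2}\sum_{m=1}^{M}(d_m-\hat{d}_m)^2,
\end{equation*}
in which the only $\bm{p}$-dependence is through $d_m = \|\bm{p}-\bm{p}_m\|$.

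First I would compute the score. Using $\partial d_m / \partial \bm{p}(i) = (\bm{p}(i)-\bm{p}_m(i))/d_m$, the chain rule gives
\begin{equation*}
\frac{\partial \ln P}{\partial \bm{p}(i)} = -\frac{1}{\sigma_d^2}\sum_{m=1}^{M}(d_m-\hat{d}_m)\,\frac{\bm{p}(i)-\bm{p}_m(i)}{d_m}.
\end{equation*}
Next I would differentiate once more with respect to $\bm{p}(j)$, applying the product rule to the factor $(d_m-\hat{d}_m)(\bm{p}(i)-\bm{p}_m(i))/d_m$. This produces two kinds of terms: a ``gradient-squared'' piece proportional to $(\bm{p}(i)-\bm{p}_m(i))(\bm{p}(j)-\bm{p}_m(j))/d_m^2$, which is \emph{deterministic in} $\bm{p}$, and a ``residual'' piece of the form $(d_m-\hat{d}_m)\cdot(\text{second derivative of }d_m)$.

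The key step is then the expectation. Because $\hat{d}_m = d_m + e_m$ with $e_m$ zero-mean by the error model adopted just before (\ref{eq:pdf of d est}), we have $\mathbb{E}[d_m-\hat{d}_m]=0$, so every residual-piece term vanishes and only the deterministic piece survives. Taking the negative expectation yields exactly (\ref{eq:FIM-elements-2}). Finally, the CRLB matrix is obtained by the textbook inequality $\mathrm{Cov}(\hat{\bm{p}})\succeq \bm{\Psi}^{-1}$, which, interpreted as a lower bound, gives $\bm{C}=\bm{\Psi}^{-1}$ as claimed.

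There is no real obstacle beyond bookkeeping; the mild subtlety is recognizing that the second derivative of $d_m$ (which contains a Kronecker-$\delta_{ij}/d_m$ term and a $-(\bm{p}(i)-\bm{p}_m(i))(\bm{p}(j)-\bm{p}_m(j))/d_m^3$ term) appears multiplied by the zero-mean residual $(d_m-\hat{d}_m)$, so it contributes nothing to the FIM after averaging. Invertibility of $\bm{\Psi}$ is implicit in writing $\bm{C}=\bm{\Psi}^{-1}$; it holds generically for $M\geq 4$ non-collinear anchors, which is the regime of interest here since the four corner RUSs of Section~\ref{sec:localization-alg} span the y-z plane.
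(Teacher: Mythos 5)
Your proposal follows essentially the same route as the paper's Appendix A: write the Gaussian log-likelihood, differentiate twice using $\partial d_m/\partial\bm{p}(i)=(\bm{p}(i)-\bm{p}_m(i))/d_m$, observe that every term carrying the zero-mean residual $\hat{d}_m-d_m$ vanishes in expectation so only the gradient-product term survives, and then invert the resulting FIM to obtain the CRLB matrix. If anything you are slightly more careful than the paper, which drops the $\delta_{ij}/d_m$ part of $\partial^2 d_m/\partial\bm{p}(i)\partial\bm{p}(j)$ in its intermediate expression (harmlessly, since it multiplies the zero-mean residual), and you also make explicit the invertibility caveat that the paper leaves implicit.
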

\begin{proof}
See Appendix \ref{appendix:CRLB-proof}.
\end{proof}

Overall,  with the  position of the UE estimated by our localization scheme that includes the distance estimation phase and CML-based 3D localization phase, we may obtain the estimation of the channel $\bm{g}$ and further compute the reflecting coefficient vector $\bm{\theta}$, according to Sec. \ref{sec:model}.
Taking the JMMSE-based method as an example, we summarize the procedure in Algorithm \ref{alg:CE}. 

\begin{algorithm}[!t]
\renewcommand{\algorithmicrequire}{\textbf{Input:}} 
\renewcommand{\algorithmicensure}{\textbf{Output:}} 
\caption{The Proposed JMMSE and CML based Channel Estimation Scheme}
\label{alg:CE}
\begin{algorithmic}[1]
\Require $\bm{h},\bm{S},\bm{p}_{bs},\bm{p}_{m},m=1,\ldots,M,\bm{p}_n,n=1...,N$
\For{$m=1,\ldots,M$}
\State Activate the $m$-th RUS to reflect the transmitted signal;
\State \% Choose the optimal codeword of the $m$-th RUS.
\For{$\tilde{n}=1,\ldots,\tilde{N}$}
\State Select the $\tilde{n}$-th codeword $\tilde{\bm{\theta}}_{\tilde{n}}$ for the RUS;
\State UE receives the wideband pilot signal:
\State  $\tilde{\bm{r}}_{\tilde{n}}=[\tilde{r}_{\tilde{n},1},\ldots,\tilde{r}_{\tilde{n},K}]^T$
\EndFor
\State $\tilde{n}_{opt} = \argmax\limits_{\tilde{n}}(\|\tilde{\bm{r}}_{\tilde{n}}\|) $;
\State Estimate the BS-$\text{RUS}_m$-UE wideband channel based on JMMSE (\ref{eq:Hm-Channel MMSE est});
\State Estimate the channel delay based on (\ref{eq:delay of BS-RUS-UE}) and (\ref{eq:b(t)});
\EndFor
\State \% Localization of the UE with the CML method.
\State Determine the position of the UE $\bm{p}$ based on Theorem \ref{theo:ML} and (\ref{eq:p-esty})-(\ref{eq:p-estx});
\State \% Channel reconstruction.
\State Calculate the distance between the UE and the elements of the RIS: $d_{r,n} = \|p-p_{n}\|,n=1,...,N$;
\State Reconstruct the RIS-UE channel $\bm{g}$ based on (\ref{eq:UE-RIS channel});
\Ensure $\bm{p}$, $\bm{g}$
\end{algorithmic}
\end{algorithm}

Note that the radiation patterns are not needed when computing the optimal reflection coefficients $\bm{\theta}$, only the path delay $d_{r,n} = \|p-p_{n}\|,n=1,...,N$ is sufficient. 
The complexity order of our method is $\mathcal{O}(M\tilde{N})$, which does not depend on the total number of the reflecting units on the RIS. It helps to keep the complexity low, especially for a large RIS. 
As the number of the reflecting elements increases, the complexity and training overhead of most traditional methods will increase significantly.
For a large RIS, our proposed scheme has a better performance due to the exploitation of the differences of the RUS-UE distances. 
In reality, the number of the RUSs can be set to $M=4$ according to our CML method,  and the number of reflecting units on each RUS is much smaller than that on the overall RIS, i.e., ($\tilde{N}\ll N$). 

Consider for example an RIS with four RUSs, each of the RUS has four rows and four columns of reflecting units. A DFT codebook containing 16 codewords is used. Assuming the same codeword is shared by all RUSs, the searching of the codeword takes 16 OFDM symbols. Four more symbols are needed to complete activating all RUSs. Thus the total number of OFDM symbols needed can be limited to 20. In general, the above proposed method can be completed efficiently with low time-frequency resource utilization.

\section{Numerical Results}\label{sec:simulation}
In this section, we verify the proposed method in simulations and analyze its performance and robustness.
We consider an RIS-aided mmWave communication system as shown in Fig. \ref{fig:RIS-assisted_Communication_System}.
The coordinates of the BS are $(5,-5,2)$ (meters), and the LOS path between the BS and the UE is assumed to be negligible due to the presence of blocking objects. The main simulation parameters are shown in Table \ref{tab:paras}. 
\begin{table}
  \centering
   \caption{Simulation Parameters}\label{tab:paras}
  \begin{tabular}{|c|c|l|}
    \hline
    \textbf{Symbol} & \textbf{Value} & \textbf{Definition} \\
    \hline
    $N$ & 8192 & Number of reflecting units on the RIS \\
    $N_v$ & 64 & Number of rows of reflecting units on the RIS \\
    $N_h$ & 128 & Number of columns of reflecting units on the RIS \\
    $D_v$(m) & 0.005 & Row spacing between adjacent units \\
    $D_h$(m) & 0.005 & Column spacing between adjacent units \\
    $M$ & 4 & Number of RUSs on the RIS \\
    $\tilde{N}$ & 16 & Number of reflecting units on an RUS\\
    $\tilde{N}_v$ & 4 & Number of rows of reflecting units on an RUS\\
    $\tilde{N}_h$ & 4 & Number of columns of reflecting units on an RUS\\
    $F_c$(GHz) & 28 & Center frequency\\
    $f_d$(Mhz) & 3.6 &Bandwidth of a subband (5 resource blocks)\\
    $K$ & 128 & Number of subbands\\
    $\bar{\sigma}^2$(dBm/Hz) & -170 & Noise power spectral density\\
    $\alpha$ & $10^4$& Coefficient of regularization\\
    \hline
  \end{tabular}
\end{table}
The RIS is located in the y-z plane with the center at $\bm{p}_c=(0,0.32,0.16)$. The RIS contains $N=N_v\times N_h$ reflecting units.
$M=4$ RUSs are at the four corners of RIS with size $\tilde{N}=\tilde{N}_h\times \tilde{N}_v$, as shown in Fig. \ref{fig:RIS example}.
The total transmit power is $P_t=30$ dBm. The gain of each reflecting unit is $G_u=9.03$ dBi\cite{Tang:2020wireless}. The antenna gains of the transmit antenna and  receive antenna are $G_t=G_r=21$ dBi, the aperture of the receive antenna $A_r =\frac{\lambda^2}{4\pi}$,
and the radiation patterns of the transmit and receive antennas are $F_n^{tx}=F_n^{rx}=1, n=1,\ldots,N$.
The radiation pattern of each reflecting unit is:
\begin{equation}\label{eq:radiation-pattern}
  F(\vartheta,\varphi)=  \left\{
  \begin{aligned}
    \cos^3(\vartheta) & \quad \vartheta \in  [0,\frac{\pi}{2}],\varphi \in [0,2\pi] \\
                    0 & \quad  \vartheta\in  (0,\frac{\pi}{2}],\varphi \in [0,2\pi]
  \end{aligned}\right.
\end{equation}

\begin{figure}[!t]
\centering
\includegraphics[width=3.2in]{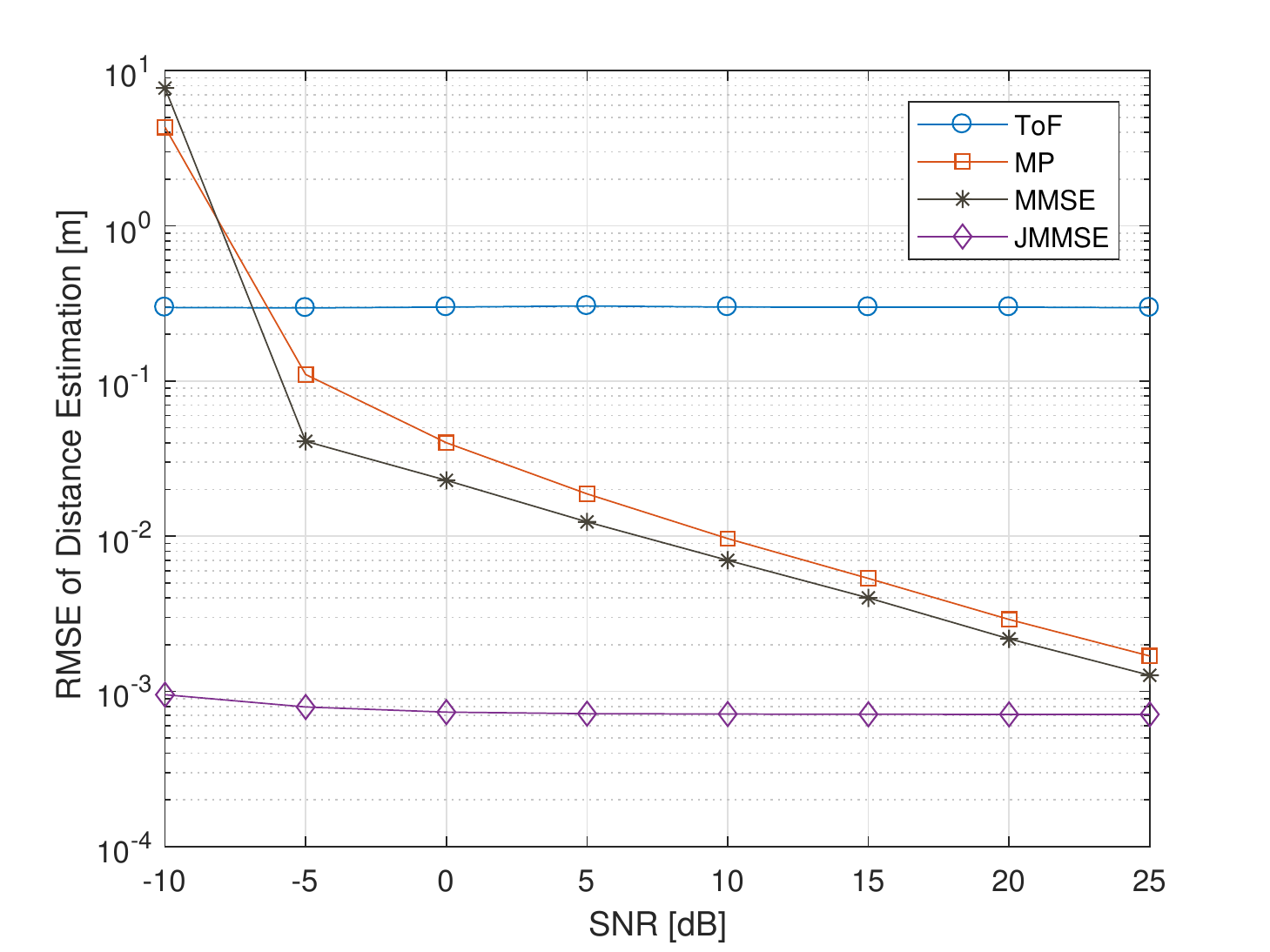}
\caption{The RMSE performance of distance estimation vs SNR of the received signal via the RUSs.}
\label{fig:d_snr}
\end{figure}
For a fixed UE with coordinate $\bm{p}=(5,0.32,0.16)$, 
Fig. \ref{fig:d_snr} shows the root mean squared error (RMSE) of the estimated distances, defined as $\sqrt{\mathbb{E}[\|\bm{d}-\hat{\bm{d}}\|^2]}$, versus the signal to noise radio (SNR) of the received signal via the RUSs, which is defined as
\begin{equation}\label{eq:SNR}
  \text{SNR} = \frac{\|\bm{S}\tilde{\bm{w}}\|^2}{\|\bm{\eta}\|^2} = \frac{P_t\|\tilde{\bm{w}}\|^2}{\bar{\sigma}^2 Kf_d},
\end{equation}
according to (\ref{eq:Ym - received signal with m-th RUS}). It varies as a function of the total transmit power $P_t$.
The distance estimated by the conventional method based on Time of Flight (ToF) is drawn from the Gaussian distribution $\mathcal{CN}(0,\sigma_{\text{ToF}}^2)$ where $\sigma_{\text{ToF}}^2$ is  1 ns\cite{Lemic:2016localization}.
For the estimated distance, it is observed that the RMSE under the proposed JMMSE method is much smaller than the other three schemes. The RMSEs under MMSE and MP are both smaller than the ToF scheme when the SNR is greater than -5 dB. 

\begin{figure}[!t]
\centering
\includegraphics[width=3.2in]{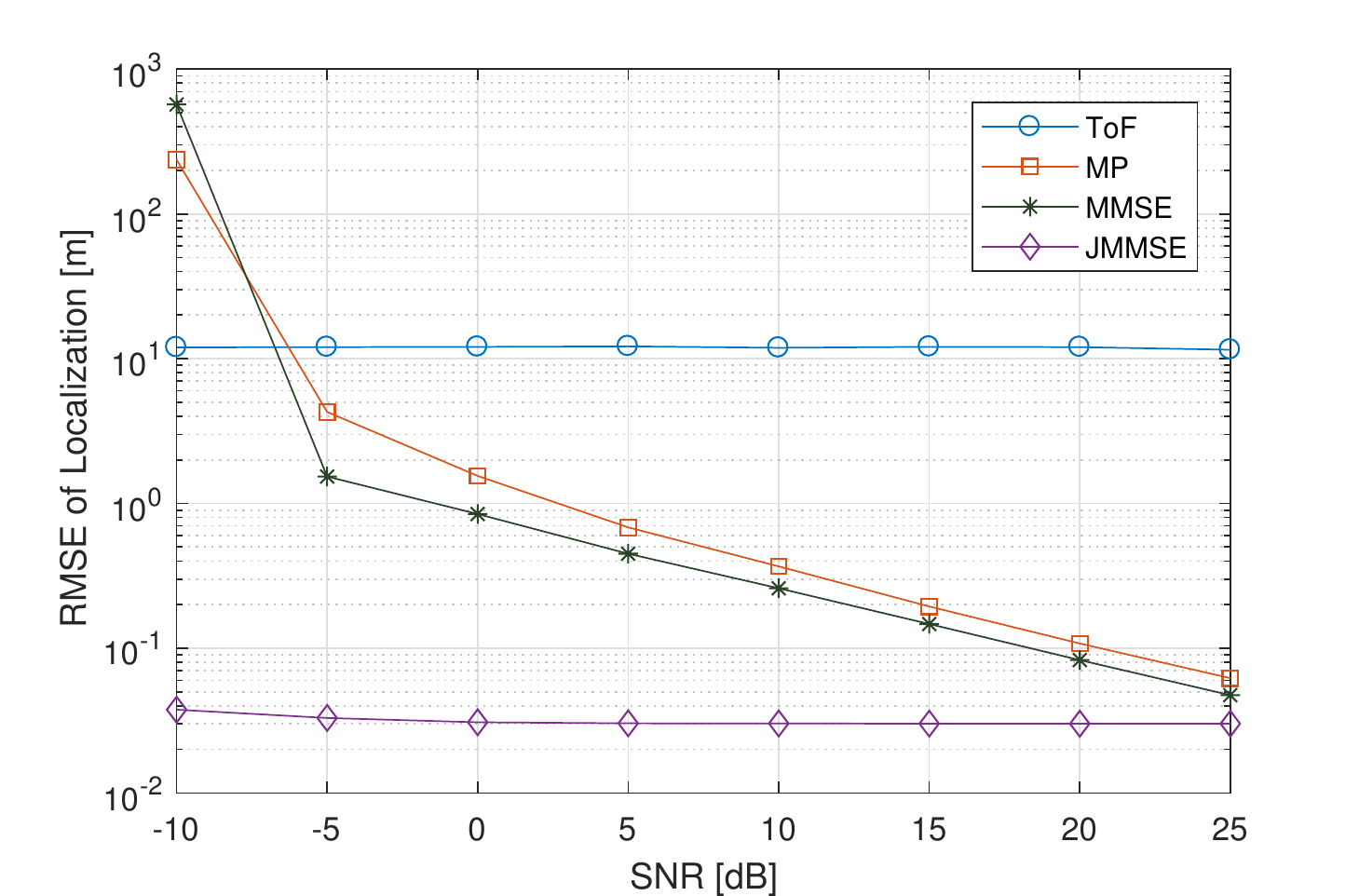}
\caption{The RMSE performance of the localization vs SNR of the received signal via the RUSs.}
\label{fig:p_snr}
\end{figure}
Combining the aforementioned distance estimation methods and the CML localization method, Fig. \ref{fig:p_snr} shows the RMSE of the localization of the UE, which is defined as $\sqrt{\mathbb{E}[\|\bm{p}-\hat{\bm{p }}\|^2]}$. Similar to the result of the distance estimation, the RMSE obtained by combining JMMSE and CML is much smaller than the other three schemes and achieves centimeter level positioning accuracy.

\begin{figure}[!t]
\centering
\includegraphics[width=3.2in]{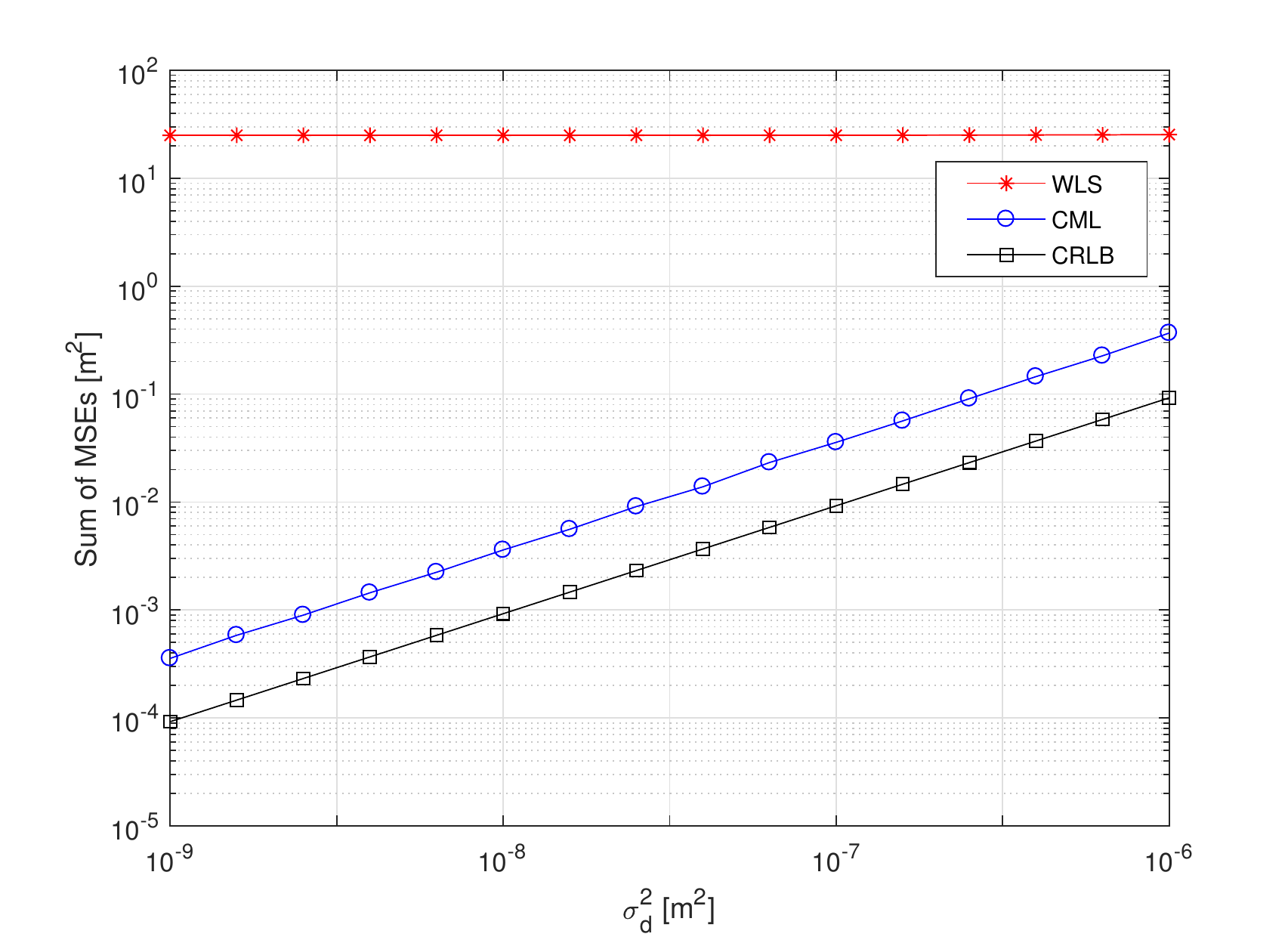}
\caption{The MSE performance of the 3D coplanar localization method versus the distance error $\sigma_d^2$.}
\label{fig:sm}
\end{figure}
To analyze the performance of the proposed CML localization algorithm theoretically, we consider a UE at a fixed position $\bm{p}=(5,0.32,0.16)$, which is unknown to the system, and assume that all the estimated errors of the $\text{UE-RUS}_m$ distances share the same noise variance $\sigma_d^2$. 
That is, each error of the measured distance is drawn from $\mathcal{CN}(0,\sigma_{d}^2)$. 
The sum of the MSEs is defined as
\begin{equation}\label{eq:sum of MSE}
  \tilde{\mathcal{M}} = \mathcal{M}(x)+\mathcal{M}(y)+\mathcal{M}(z),
\end{equation}
where $\mathcal{M}(x)$, $\mathcal{M}(y)$ and $\mathcal{M}(z)$ are the MSEs of the estimated x, y, z-coordinate of the UE respectively, e.g.,
$\mathcal{M}(x)=\mathbb{E}[(\hat{x}-\bar{x})^2]$. According to Theorem \ref{theo:CRLB}, the sum of CRLBs is defined as
\begin{equation}\label{eq:total CRLB}
  \tilde{\mathcal{C}} = C_{11}+C_{22}+C_{33}.
\end{equation}
Fig. \ref{fig:sm} compares the sum of the MSE for the CML method with the corresponding CRLB and the traditional weighted least squares (WLS) algorithm \cite{ni_uwb_2019}. The figure shows the proposed CML is much closer to the CRLB than the traditional WLS.

\begin{figure}[!t]
\centering
\includegraphics[width=3.2in]{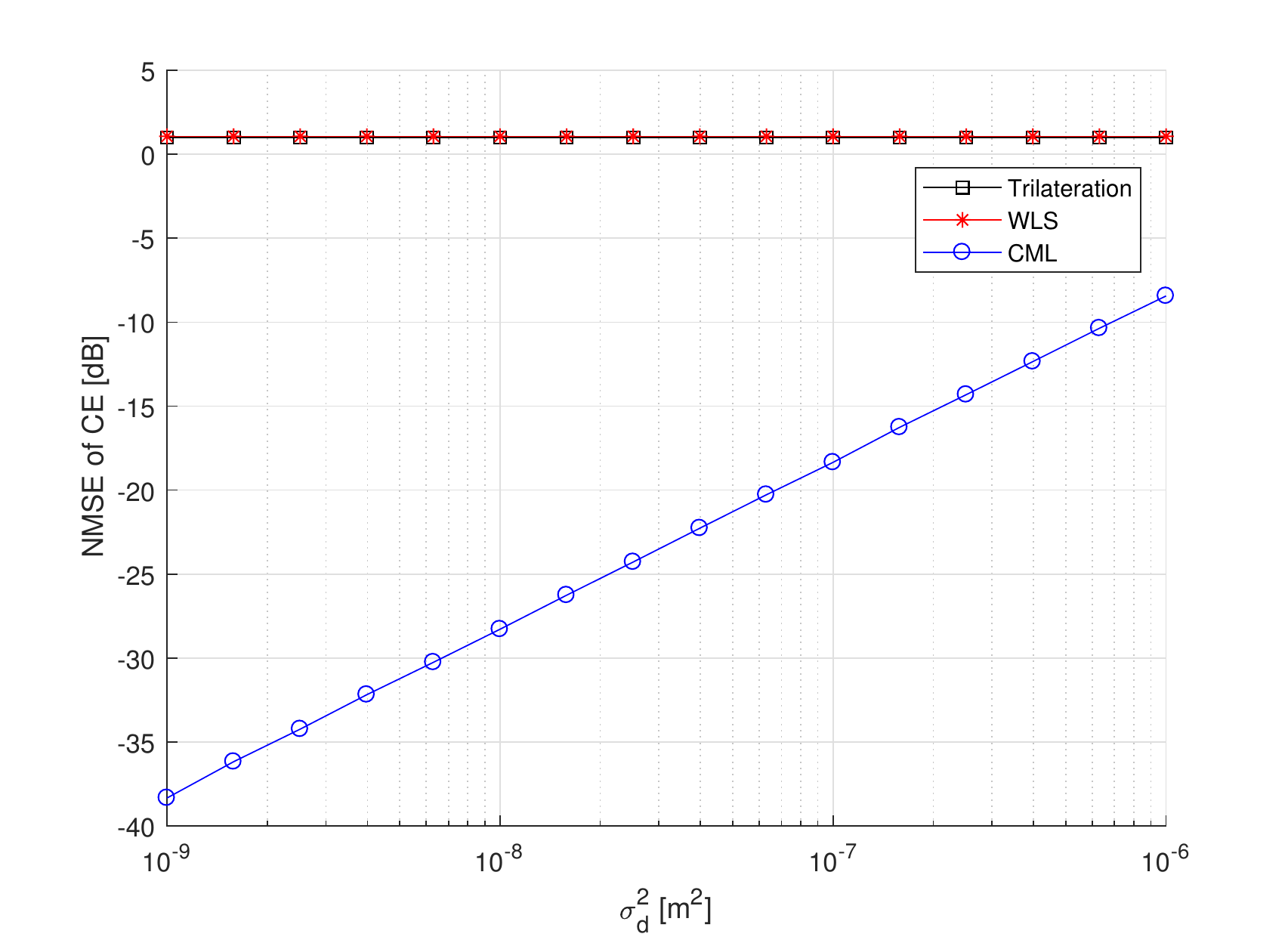}
\caption{The normalized MSE of the channel estimation versus the distance error $\sigma_d^2$.}
\label{fig:nmseg_d}
\end{figure}

Further, we illustrate the performance of the proposed channel estimation scheme against  the WLS algorithm and the conventional trilateration method \cite{DOUKHNITCH:2008Trila3D}. 
The normalized MSE (NMSE) of the channel estimation is defined as
\begin{equation}\label{eq:NMSE CE}
  \text{NMSE} = \mathbb{E}\Big[\frac{|w-\hat{w}|^2}{|w|^2}\Big],
\end{equation}
where $w$ and $\hat{w}$ denote the true cascaded BS-RIS-UE channel and the estimated channel, respectively. 
Fig. \ref{fig:nmseg_d} shows the NMSE of channel estimation based on the aforementioned methods, and indicates that the proposed scheme can achieve lower NMSE than -15 dB when $\sigma_d^2$ is smaller than $10^{-7}$, which outperforms the other two methods.

We let the UE move on the central perpendicular line (CPL) of the RIS. The total transmit power of the BS is fixed to $P_t=30$ dBm.
We combine the JMMSE distance estimation method and CML localization algorithm, which is labeled as ``Proposed Method". 
The curve obtained with the true position of the UE is marked as ``Upper Bound", meaning that perfect CSI is known.
In the ``Uniform Random Case", the phase shift coefficient of each reflecting element is randomly and uniformly distributed, i.e., $\phi_{n}\sim \mathcal{U}(0,2\pi)$. 
The performance of the JBF algorithm \cite{He:2020CascadedChannel} is also shown as a baseline. 
\begin{figure}[!t]
\centering
\includegraphics[width=3.2in]{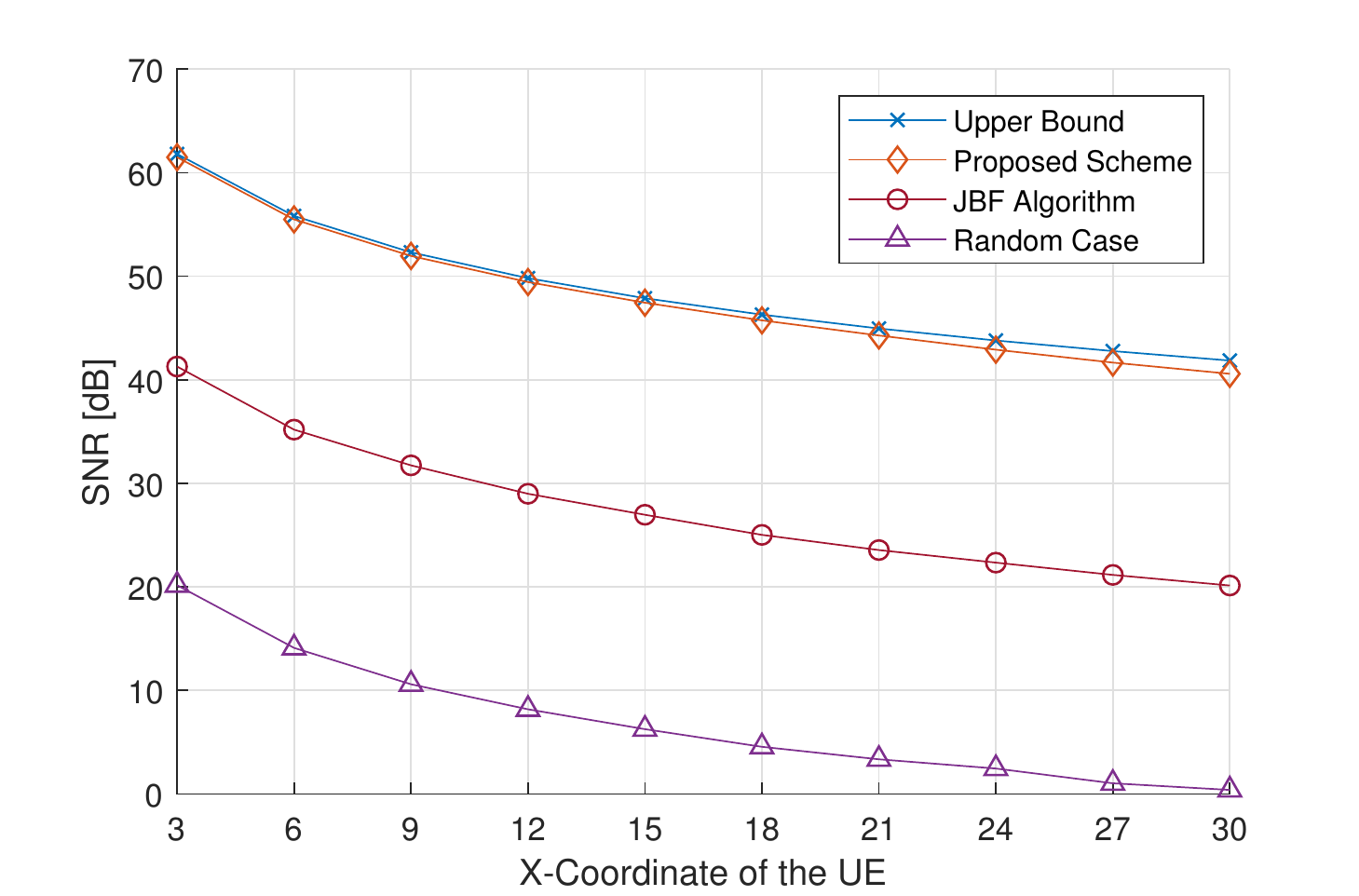}
\caption{The SNR of the received signal via the optimized RIS versus the X-coordinate of the UE.}
\label{fig:snr_x_dB}
\end{figure}

Fig. \ref{fig:snr_x_dB} shows the SNR of the received signal at the UE as a function of the x-coordinate of the UE. The proposed method achieves a  higher SNR than the JBF method, as it approaches the upper bound and has around 40 dB gain over the random phase case.

\begin{figure}[!t]
\centering
\includegraphics[width=3.2in]{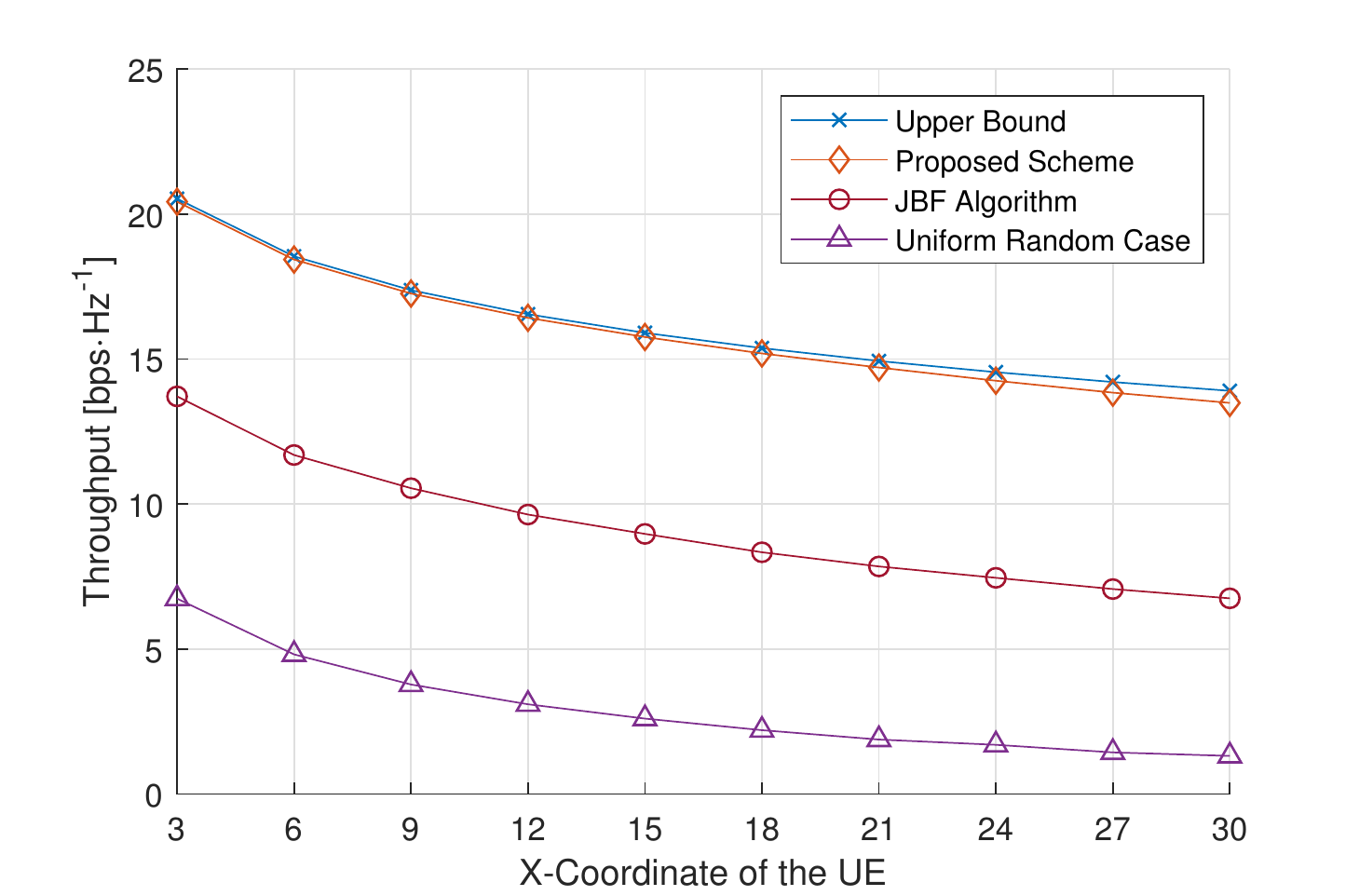}
\caption{The throughput at the receiver versus X-coordinate of the UE.}
\label{fig:th_x}
\end{figure}
Fig. \ref{fig:th_x} shows the throughput versus the x-coordinate of the UE. The throughput is defined as $\log_2(1+\text{SNR})$.
It is observed that the throughput of the proposed method is much higher than the other three schemes and approaches the upper bound. Moreover, the proposed method provides around 13 bps/Hz higher throughput than the random case.

\section{Conclusions}\label{sec:conclusion}
In this paper, we considered the channel estimation of an RIS-aided mmWave communication system. We first proposed the RUS concept for the purpose of positioning the UE based on the structure of the mmWave channel. 
We then proposed a CML 3D localization algorithm with a JMMSE distance estimation method. The CRLBs of the distance estimation error of our proposed method are derived. 
The CSI is reconstructed with the positioning results. 
By exploiting the the sparse multipath angle-delay distribution of the mmWave channel, our method obtains accurate channel estimate we at the cost of very few training resources. Simulation results demonstrated significant performance gain in terms of the SNR over traditional methods.

\appendices
\section{ Proof of Theorem \ref{theo:CRLB} }\label{appendix:CRLB-proof}

The distance between the UE and the $m$-th RUS is defined as:
\begin{equation}\label{eq:dm}
\begin{split}
   d_m & = \|\bm{p}-\bm{p}_m\| \\
     & =[(\bar{x}-x_m)^2+(\bar{y}-y_m)^2+(\bar{z}-z_m)^2]^{\frac{1}{2}},
\end{split}
\end{equation}
the partial derivative of which with respect to $\bm{p}(i),i=1,2,3$, can be calculated as
\begin{equation}\label{eq:dm-x}
  \frac{\partial d_m}{\partial \bm{p}(i)}=\frac{\bm{p}(i)-\bm{p}_m(i)}{d_m}
\end{equation}

The log-likelihood of (\ref{eq:pdf of d est}) is
\begin{equation}\label{eq:P-ln}
  \ln P(\hat{\bm{d}}|\bm{p}) = -\sum_{m=1}^{M}\frac{(\hat{d}_m-d_m)^2}{2\sigma_d^2}-\sum_{m=1}^{M}\ln
  \sqrt{2\pi \sigma_d^2}
\end{equation}
and the partial derivative of (\ref{eq:P-ln}) with respect to $\bm{p}(i)$ is
\begin{equation}\label{eq:ln P}
\begin{split}
   \frac{\partial \ln P(\hat{\bm{d}}|\bm{p})}{\partial \bm{p}(i)} & =-\sum_{m=1}^{M}\frac{\partial}{\partial \bm{p}(i)}\left[ \frac{(\hat{d}_m-d_m)^2}{2\sigma_d^2} \right]\\
     & = \sum_{m=1}^{M}\frac{(\hat{d}_m-d_m)[\bm{p}(i)-\bm{p}_m(i)]}{\sigma_d^2d_m} \\
\end{split}
\end{equation}
Further, the second partial derivative of (\ref{eq:pdf of d est}) with respect to $\bm{p}(i)$ and $\bm{p}(j)$ is
\begin{equation}\label{eq:ln p 2}
\begin{split}
   \frac{\partial^2 \ln P(\hat{\bm{d}}|\bm{p})}{\partial \bm{p}(i)\partial \bm{p}(j)} & = \sum_{m=1}^{M}\frac{1}{\sigma_d^2}\frac{\partial}{\partial\bm{p}(j)} \Big [ \frac{(\hat{d}_m-d_m)[\bm{p}(i)-\bm{p}_m(i)]}{d_m} \Big ] \\
     & = \sum_{m=1}^{M}\frac{1}{\sigma_d^2}\Big[ -\frac{[\bm{p}(i)-\bm{p}_{m}(i)]\cdot[\bm{p}(j)-\bm{p}_{m}(j)]}{d_m^2} \\
     &  + \frac{(\hat{d}_m-d_m)[\bm{p}(i)-\bm{p}_{m}(i)]\cdot[\bm{p}(j)-\bm{p}_{m}(j)]}{d_m^3}\Big]
\end{split}
\end{equation}

Since $\triangle d_m$ follows a Gaussian distribution, $\mathbb{E}[\hat{d}_m-d_m]=0$.
The element of the FIM is expressed as:
\begin{equation}\label{eq:FIM}
\begin{split}
   \bm{\Psi}_{ij} & = -\mathbb{E}\bigg[\frac{\partial^2\ln P(\hat{\bm{d}}|\bm{p})}{\partial\bm{p}(i)\partial\bm{p}(j)}\bigg] \\
     & =\sum_{m=1}^{M}\frac{1}{\sigma_d^2}\frac{[\bm{p}(i)-\bm{p}_{m}(i)]\cdot[\bm{p}(j)-\bm{p}_{m}(j)]}{d_m^2},
\end{split}
\end{equation}

Finally, the CRLB matrix  can be expressed as:
\begin{equation}\label{eq:CRLB}
  \bm{C}=
  \left [\begin{matrix}
    C_{11} & C_{12} & C_{13} \\
    C_{21} & C_{22} & C_{23} \\
    C_{31} & C_{32} & C_{33}
  \end{matrix}\right ]=\bm{\Psi}^{-1}.
\end{equation}

\ifCLASSOPTIONcaptionsoff
  \newpage
\fi

\bibliographystyle{IEEEtran}
\bibliography{allCitations}

\end{document}